\documentclass[submission,copyright,creativecommons]{eptcs}
 % Name of the event you are submitting to
\usepackage{breakurl}             % Not needed if you use pdflatex only.
\usepackage{underscore}           % Only needed if you use pdflatex.
\usepackage{graphicx}
\usepackage{stmaryrd}
\usepackage{amsmath}
\usepackage{amsthm}
\usepackage{bbm}
\usepackage{amssymb}
\sloppy

\allowdisplaybreaks
\theoremstyle{definition}

\newtheorem{definition}{Definition}
\theoremstyle{plain}
\newtheorem{lemma}{Lemma}
\newtheorem{proposition}{Proposition}
\newtheorem{theorem}{Theorem}
\newcommand{\llb}{\llbracket}
\newcommand{\rrb}{\rrbracket}
\newcommand{\llc}{\{\!\mid}
\newcommand{\rrc}{\mid\!\}}
\newcommand{\ot}{\leftarrow}

\newcommand{\Real}{{\mathbb R}}
\newcommand*{\bind}{\mathrel{\scalebox{0.7}[1]{$>\!\!\!>\!=$}}}

  \title{Monads for Measurable Queries in Probabilistic Databases}
  \author{Swaraj Dash \qquad\qquad Sam Staton
  \institute{University of Oxford\\ Oxford, United Kingdom}
\email{\{swaraj.dash,sam.staton\}@cs.ox.ac.uk}
}

\begin{document}
\maketitle

\begin{abstract} 
We consider a bag (multiset) monad on the category of standard Borel spaces, and show that it gives a free measurable commutative monoid.
  Firstly, we show that a recent measurability result for probabilistic database queries (Grohe and Lindner, ICDT 2020) follows quickly from the fact that queries can be expressed in monad-based terms. We also extend this measurability result to a fuller query language.
  Secondly, we discuss a distributive law between probability and bag monads, and we illustrate that this is useful for generating probabilistic databases. 
\end{abstract}

\section{Introduction}
Probabilistic databases cater for uncertainty in data. There may be uncertainty about whether rows should be in a database, or uncertainty about what values certain attributes should have.

For example, consider a database of movies. We might have a table that assigns the gross amount to each movie, which may be quite uncertain for older movies. We might have a table that records which actors appeared in which movies, and there may be uncertainty about whether a particular actor appeared in a given movie. The uncertainty might come from incorrect text processing, for example if the information was scraped off internet forums, or just noise in measurement, e.g.~if the gross amount is difficult to calculate precisely. This is a simple example, but probabilistic databases have applications in other areas of information extraction as well as in scientific data management, medical records, and in data cleaning. See the textbook~\cite{pdb-book} for further examples. 

In this paper, we argue that the semantics of probabilistic databases lies in combining a probability monad,~$P$, with a bag monad $B$ (aka multiset). This builds on the long-established tradition of using monads to structure computational effects in functional programming~\cite{wadler-comprehension,comp-syntax}. 

A good semantic analysis is important in view of the recent work of Grohe and Lindner~\cite{grohe-lindner,grohe-lindner-inf-prob} which builds on~\cite{mcdb,open-world-prob-db}. This new line of work breaks with the traditional approach of having a fixed finite support for the probabilistic database, and argues that the support should be infinite, possibly uncountable. For example, it may be that the gross takings from a movie are approximated as a real number taken from a normal distribution, and it may be that the number of actors appearing in a movie is unknown and unbounded. This leads to semantic complications and introduces issues of measurability. 

\subsection{Two monads}\newcommand{\movieschema}{\mathsf{MovieFact}}\newcommand{\actor}{\mathsf{Actor}}
\newcommand{\cast}{\mathsf{cast}}
\newcommand{\gross}{\mathsf{gross}}
\newcommand{\Movie}{\mathsf{Movie}}
\newcommand{\Actor}{\mathsf{Actor}}
We argue that probabilistic databases are best understood as inhabitants of a set, or space,
\[P(B(X)) \qquad\text{where:}\]
\begin{itemize}\item $X$ is a space of all records (aka rows, tuples) that are allowed according to the schema. For example, in the movie database above, we put $X=\movieschema$ where
\[\movieschema\ =\ \Big(\cast\colon (\mathsf{Actor}\times\mathsf{Movie})\ \uplus\  \gross\colon (\mathsf{Movie}\times \mathbb{R})\Big)\]
  since we can either record that an actor appeared in a movie, or that a movie had a certain gross. (Here we are using a standard notation for tagged disjoint unions.)
\item $B$ is a monad of bags (aka multisets). So $B(X)$ is the space of bags over $X$, and these are the \emph{deterministic} databases for the given schema.
\item $P$ is a probability monad. So $P(B(X))$ is the space of probability distributions (or measures) over the space of deterministic databases, and these are the probabilistic databases for the given schema.
\end{itemize}
In the traditional case, studied in~\cite{pdb-book}, the probability distributions have finite support. In the general setting proposed by~\cite{grohe-lindner-inf-prob}, the support of a distribution is uncountable. This is formalized using measure theory, by placing a $\sigma$-algebra on $X$, by deriving a $\sigma$-algebra on $B(X)$ and $P(B(X))$. We can regard this as moving from the category of sets to a category of measurable spaces. As we will show (Theorem~\ref{thm:bag-monad}), the bag monad $B$ extends to a monad on the category of measurable spaces. We can then regard~$P$ as the Giry monad on the category of measurable spaces~\cite{giry}.

We clarify a subtle point. The support of the distributions in $P(B(X))$ might be infinite, and this means that the set of records that have a chance of appearing in the database can be infinite. But this is a different issue from the sizes of the bags under consideration, which will always be finite. For example, there are infinite possibilities as to what the gross from a movie is, but the number of movies will always be finite. The number of actors in a movie is unbounded, but there is never an infinite cast list for a particular movie. 

\subsection{Measurable queries}
In the deterministic setting, a query (aka view) translates a database from one schema to another. For example, we might ask,
\begin{equation}\label{eqn:query}
  \emph{``Which actors appeared in films that grossed at least \textdollar 200m?''}
\end{equation}
This is a function $q: B(\movieschema)\to B(\actor)$.
For probabilistic databases, the usual approach is to consider queries on deterministic databases, and then lift them to probabilistic databases. Semantically, this can regarded as the functorial action of the monad $P$, which gives a translation between probabilistic databases:
\[P(q) \ :\ P(B(\movieschema))\to P(B(\actor))\]
Notice that if there is uncertainty about whether an actor appeared in a movie, or about what the gross of the movie was, then this will lead to uncertainty about whether that actor should appear in this view.

This functorial action $P(q)$ amounts to pushing forward the probability measure. But this is only legitimate if the query $q$ is measurable. 
In Theorem~\ref{thm:balg}, we show that all queries are measurable provided they are definable in the standard BALG query language for bags~\cite{balg}.

Our proof of measurability is straightforward, because most of the BALG query operations are directly definable from the monad structure of~$B$ (Theorem~\ref{thm:bag-monad}). The remaining operations are easily definable from an fold construction (Theorem~\ref{thm:fold}), which is connected to the fact that~$B(X)$ is the free commutative monoid on~$X$. 

Measurability of a fragment of BALG is perhaps the main technical result of~\cite{grohe-lindner-inf-prob}. That work was groundbreaking, but here we have two additional contributions:
\begin{enumerate}
\item we show that the full language BALG is measurable, which allows us to also treat aggregation queries within the same framework, and
\item we demonstrate that the proof of measurability is almost immediate from the categorical properties of the monad~$B$. 
\end{enumerate}
We give the full details of BALG in Section~\ref{sec:queries}. But for now we note that another way to see that the particular query \eqref{eqn:query} is measurable is that it can be written in the monad comprehension syntax as
\[
  q(b)\ =\ \llc a~|~\cast(a,m) \leftarrow b , \gross(m',r) \leftarrow b,
  m=m',r>200\,000\,000\rrc 
\]
This comprehension syntax works for any strong monad (Section~\ref{sec:compr} and ~\cite{wadler-comprehension}),
indeed it is merely a convenient shorthand for
\[
  b \bind \lambda x.\,b\bind \lambda y. \begin{cases}
    \mathsf{return}(a)&\text{if $x=\cast(a,m)$, $y=\gross(m,r)$ and $r>200\,000\,000$}\\
    \varnothing &\text{otherwise}\end{cases}
\]
where $\bind$ is the monad bind (Kleisli composition) and $\mathsf{return}$ is the monad unit. 
The predicates ($>$, $=$) are well-known to be measurable on the domains where they are used here, and so the query must be measurable. 

As an aside, we remark that much work in the database literature is on computing the results of queries efficiently. In the probabilistic setting, this is even more of a problem. But in this paper (as in~\cite{grohe-lindner-inf-prob}) we are focusing on the semantic aspects. 
\subsection{Generating probabilistic databases}
Having established the measurability of the query language, in Section~\ref{sec:gen} we turn to investigate languages for generating probabilistic databases.
For this we turn to the composite of the monads, $P\circ B$, which we have already shown to be a monad in~\cite{dash} (see also~\cite{jacobs-multisets,keimel-plotkin}). 
As we demonstrate, the language for the monad $P\circ B$ appears to be ideal for generating probabilistic databases, at least as an intermediate language. 

The paradigm for using infinite support probabilistic databases is still under debate, but typically one would begin from a deterministic database, and then add some randomness. Very simple kinds of randomness include
\begin{itemize}
\item adding noise to certain attributes, such as the movie gross, or blood pressure in a medical database;
\item adding or deleting records at random, if there was uncertainty in the accuracy of those records. 
\end{itemize}
We demonstrate how this can be done easily in the monad $P\circ B$. We also investigate a more elaborate model based on a GDatalog program, which translates very cleanly into the language of the $P\circ B$ monad. 

\subsection{Connection with other work on programming semantics}
Our work discusses probabilistic databases in the context of monads and functional programming, and so we bring the general ideas of probabilistic databases to the language of functional probabilistic programming languages.
We have already prototyped our examples simply by implementing a bag monad in Haskell and using a standard Haskell library for probabilistic inference. The idea of applying ideas from probabilistic programming to databases already has some momentum on the practical side, through languages such as BayesDB~\cite{bayesdb} and PClean~\cite{pclean}. Slightly further afield are probabilistic logic languages such as Blog~\cite{russell} and ProbLog~\cite{problog}. 

Probabilistic programming is a general approach for statistics.
Within statistics, inhabitants of $P(B(X))$ are well-known and important, and called `point processes'. 

Further over to the semantic side, we note that the relevance of bags for probability has recently been emphasised by Jacobs~\cite{jacobs-multisets,jacobs-mfps}. Bags are a form of non-determinism, and the problem for combining non-determinism and probability is notoriously subtle, although there has been plenty of recent progress~\cite{goy-petrisan,keimel-plotkin,mio-vignudelli,mow-prob-nondet,non-prob-aut,varacca}. The particular combination we use here is trouble-free. 

\subsection{Summary}
In this paper we show the following. 
\begin{itemize}
\item The Bag monad extends to a strong  monad on standard Borel spaces (Thm.~\ref{thm:bag-monad}).
\item The Bag monad gives a free commutative monoid, and has a `fold' construction (Thm.~\ref{thm:free-mon}, Thm.~\ref{thm:fold}).
\item The BALG language for database queries always yields functions that are measurable (Thm.~\ref{thm:balg}).
\item The composite monad $P\circ B$ combines probability and bags and is useful for generating probabilistic databases. 
\end{itemize}
\paragraph{Acknowledgements.} We are grateful to Peter Lindner for discussions. It has also been helpful to
discuss this work with Martin Grohe, Bart Jacobs, Sean Moss, and Philip Saville.
We acknowledge funding from Royal Society University Research Fellowship, the ERC BLAST grant, and the
Air Force Office of Scientific Research under award number FA9550-21-1-0038. Any opinions, findings, and
conclusions or recommendations expressed in this material are those of the author(s) and do not necessarily
reflect the views of the United States Air Force.
%%% Local Variables:
%%% mode: latex
%%% TeX-master: "pdb-monad"
%%% End:

\section{Mathematical preliminaries}\label{sec:prelims}
\newcommand*{\dnib}{\mathrel{\scalebox{0.7}[1]{$=\!<\!\!\!<$}}}
\newcommand{\id}{\mathrm{id}}

\subsection{Measure theory}
\begin{definition}
The \emph{Borel sets} form the least collection $\Sigma_\mathbb{R}$ of subsets
of $\mathbb{R}$ containing intervals $(a,b) \subseteq \mathbb{R}$ which is
closed under complementation and countable unions.
\end{definition}

\begin{definition}
A \emph{$\sigma$-algebra} on a set $X$ is a nonempty family $\Sigma_X$ of subsets
of $X$ that is closed under complements and countable unions. 
The pair $(X,\Sigma_X)$ is called a \emph{measurable space} (we just write $X$ when 
$\Sigma_X$ can be inferred from context).

Given $(X,\Sigma_X)$, a \emph{measure} is a function
$\nu: \Sigma_X \rightarrow \mathbb{R}^\infty_+$ such that
for all countable collections of disjoint sets $A_i \in \Sigma_X$,
$\nu\left(\bigcup_i A_i\right) = \sum_i \nu(A_i).$
In particular,  
$\nu(\varnothing) = 0$. It is a \emph{probability measure} if $\nu(X)=1$.
\end{definition}

\begin{definition}
Let $(X,\Sigma_X)$ and $(Y,\Sigma_Y)$ be measurable spaces.
A measurable function $f : X \to Y$ is a function such that
$f^{-1}(U) \in \Sigma_X$ when $U \in \Sigma_Y$.
\end{definition}

\begin{definition}
A measurable space $(X,\Sigma_X)$ is a \emph{standard Borel space} if it is either measurably
isomorphic to $(\mathbb{R},\Sigma_\mathbb{R})$ or it is countable and discrete.
\end{definition}
(This is equivalent to the usual definition of standard Borel spaces, which involves Polish spaces.)

%Standard Borel spaces are very well behaved, such as being closed under countable products
%and countable coproducts. For example, the equality predicate
%$X\times X\to \textsf{Bool}$ is measurable when $X$ is a standard Borel space.
%%Standard Borel spaces are closed under countable products and countable coproducts. 
%Standard Borel spaces include the measurable spaces of real numbers and the integers,
%as well as all finite discrete spaces such as the booleans. So all the measurable spaces
%that arise in probabilistic databases are standard Borel.
%Moreover, some non-standard Borel spaces do not work well with PDB operations,
%as shown by Grohe and Lindner~\cite[Example~8]{grohe-lindner-inf-prob}.

Standard Borel spaces include the measurable spaces of real numbers and the integers, as well as all finite
discrete spaces such as the booleans. So all the measurable spaces that arise in probabilistic databases are
standard Borel, and indeed the restriction to standard Borel spaces is also made in Grohe and Lindner 
(see~\cite[Section~3.1]{grohe-lindner-inf-prob}).
Standard Borel spaces are closed under countable products and countable coproducts. Moreover, the 
equality predicate $X\times X\to \textsf{Bool}$ is measurable when $X$ is a standard Borel space.

\subsection{Monads}
\begin{definition}
  A monad on a category is given by an object
  $TX$ for each object $X$,  a morphism
  $X\to TX$ for each object $X$, and for objects $X$ and $Y$ and
  morphism $f:X\to TY$ a morphism
  $f\dnib:TX\to TY$ is given, satisfying identity and associativity laws (see e.g.~\cite{moggi}). 

  A \emph{strong} monad on a category with products is equipped with a morphism
  $X\times TY \to T(X\times Y)$ that respects the structure (see e.g.~\cite{moggi}).
\end{definition}
The construction $\dnib$ is sometimes called bind or Kleisli composition.
\subsubsection{Monad comprehension notation}\label{sec:compr}
For any strong monad we can use a comprehension notation, which is just syntactic sugar for chaining together
compositions of $\dnib$. The name comes from the fact that this notation resembles set comprehension notation, and when $T$ is the powerset monad on the category of sets, this is exactly set comprehension. But it makes sense for any monad, and is often used with the list monad~\cite{wadler-comprehension}.
As we will see (Section \ref{sec:queries}), for the bag monad, it gives an alternative notation for queries based on products, projection and selection (see also~\cite{comp-syntax}).

Given $f_1:A\to TX_1$, ${f_2:A\times X_1\to T X_2}$, $f_3:A\times X_1\times X_2\to T X_3$,
$f_n:A\times X_1\times\dots\times X_{n-1}\to TX_n$, 
and given $g:A\times X_1\times X_2\times \dots \times X_n\to Y$, we write
\[
  a\ \mapsto \ \llc g(a,x_1,\dots x_n)~|~x_1\leftarrow f_1(a),x_2\leftarrow f_2(a,x_1),\dots,
  x_n\leftarrow f_n(a,x_1,\dots,x_{n-1}) \rrc
\]
for the composite morphism 
\[
 A
 \xrightarrow{\bar{f_1}} 
 T(A\times X_1)
 \xrightarrow {\bar {f_2}\dnib}
 T(A\times X_1\times X_2)
 \to \dots \xrightarrow{\bar {f_n}\dnib}
 T(A\times X_1\times X_2\times \dots \times X_n)
 \xrightarrow {T(g)}
 T(Y)
\]
where
\[\bar{f_i}=A\times X_1\times\dots X_{i-1}
  \xrightarrow{(\id,f_i)} A\times X_1\times \dots X_{i-1}\times TX_i
  \xrightarrow{\text{str}}
  T(A\times X_1\times\dots\times X_i)\text.\]

\subsection{The Giry monad}
The Giry monad~\cite{giry} is a first key monad on measurable spaces. It also restricts to standard Borel spaces.
If $X$ is a measurable space, then $P(X)$ is the set of probability measures on~$X$
equipped with the $\sigma$-algebra generated by
$A_r^U=\{p\in P(X)~|~p(U)\leq r\}$.
The unit is given by the Dirac measures ($\eta(x)(U)=1$ if $x\in U$, otherwise $0$).
The bind is
given by Lebesgue integration: if $f:X\to P(Y)$ then
$(f\dnib p)(U)=\int f(x)(U)\,p(\mathrm{d}x)$.
The strength $s:X\times PY\to P(X\times Y)$ is given by $s(x,p)(U)=p(\{y~|~(x,y)\in U\})$.

%%% Local Variables:
%%% mode: latex
%%% TeX-master: "pdb-monad"
%%% End:

\section{The bag monoid and monad on measurable spaces}\label{sec:bag}

Let $X$ be a set. A bag, aka multiset, is a finite unordered list of elements of $X$, or more formally an equivalence class of lists under permutation. Equivalently, a bag is a function $b:X\to \mathbb{N}$ such that $\{x~|~b(x)\neq 0\}$ is finite, or more formally it is an integer valued finite measure. 

In this section we will focus on bags in the category of standard Borel spaces. We will show that the bags form a free commutative monoid, and support a `fold' operation. We will also show that the bag construction forms a strong monad. 

We begin by defining the measurable space of bags on some measurable space.

\begin{definition}\label{bag-def}
Let $X$ be a measurable space. Let $BX$ be the set of bags on the set underlying $X$. Equip $BX$ with the least $\sigma$-algebra $\Sigma_{BX}$
containing the generating sets
$A^U_k = \{ b \in BX \mid \text{$b$ contains exactly}$ $\text{$k$ elements in $U$} \}$.
$$\Sigma_{BX} = \sigma(\{ A^U_k \mid U \in \Sigma_X, k \in \mathbb N\})$$
Then $(BX,\Sigma_{BX})$ is the measurable space of bags of $X$.
\end{definition}

Some observations about the space of bags $BX$ are helpful in what follows.
First we note that for any measurable space $X$ we
can decompose the set $BX$ of bags of $X$ into a disjoint union of the set of bags $B_n X$ of size
$n$ for all $n \in \mathbb N$. We can equip each $B_nX$ with the sub-$\sigma$-algebra. Then,
$$BX = \biguplus_{n \in \mathbb N} B_n X.$$
Second we record the following lemma. 
\begin{lemma}\label{lem:fold} Let $X$ be a non-empty standard Borel space. 
  The quotient function $X^n\to B_nX$, which takes a tuple $(x_1,\dots,x_n)$ to the bag
  $\llc x_1\dots x_n\rrc$, is measurable, and has a measurable section $B_nX\to X^n$. 
\end{lemma}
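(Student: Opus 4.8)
The statement has two parts: measurability of the quotient $q\colon X^n\to B_nX$, and the construction of a measurable section. The plan is to dispatch the first quickly and spend the real effort on the second. For $q$, since $\Sigma_{B_nX}$ is generated by the sets $A^U_k\cap B_nX$, it suffices to check that each preimage $q^{-1}(A^U_k\cap B_nX)$ is measurable in $X^n$. This preimage is the set of tuples $(x_1,\dots,x_n)$ exactly $k$ of whose coordinates lie in $U$, which is the finite union over $k$-element index sets $S\subseteq\{1,\dots,n\}$ of the measurable rectangles cutting out $x_j\in U$ for $j\in S$ and $x_j\notin U$ otherwise. Hence $q$ is measurable.

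For the section the idea is to send a bag to its sorted tuple, which requires a measurable linear order on $X$. Since $X$ is a non-empty standard Borel space, I would fix a measurable isomorphism $\iota\colon X\to\iota(X)\subseteq\Real$ onto a Borel subset of the reals (the case $X\cong\Real$ is immediate; a countable discrete $X$ embeds as $\Nat\subseteq\Real$, with measurable inverse since every function on a discrete space is measurable). Using the order pulled back along $\iota$, define $s_i\colon B_nX\to X$ to pick out the element of $b$ whose $\iota$-value is the $i$-th smallest, counted with multiplicity, and set $s=(s_1,\dots,s_n)\colon B_nX\to X^n$. By construction the underlying bag of $s(b)$ is $b$, so $q\circ s=\id$; it then remains only to prove $s$ measurable, which reduces to measurability of each $s_i$.

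The crux is this measurability check, and it is where the generating $\sigma$-algebra does the work. Rather than analyse $s_i$ directly I would show that $\iota\circ s_i\colon B_nX\to\Real$ is measurable and then recover $s_i=\iota^{-1}\circ(\iota\circ s_i)$ using that $\iota^{-1}$ is measurable on the Borel set $\iota(X)$. The key identity is that the $i$-th smallest $\iota$-value is $\le t$ exactly when at least $i$ elements of $b$ lie in the measurable set $V_t=\iota^{-1}\big((-\infty,t]\big)$; hence
\[
  (\iota\circ s_i)^{-1}\big((-\infty,t]\big)\ =\ \{\,b\in B_nX\mid b\text{ has at least }i\text{ elements in }V_t\,\}\ =\ \bigcup_{k=i}^{n}\big(A^{V_t}_k\cap B_nX\big),
\]
a finite union of generators, hence measurable. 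Since the rays $(-\infty,t]$ generate $\Sigma_\Real$, this proves $\iota\circ s_i$ measurable, and the result follows. The main obstacle I anticipate is bookkeeping rather than conceptual: verifying that sorting by $\iota$-value is compatible with the multiplicities in the bag (so that the counting identity above is exact and $q\circ s=\id$ genuinely holds), together with the folklore facts that a standard Borel space embeds as a Borel subset of $\Real$ with measurable inverse. I note that the argument also yields, as a by-product, a measurable bijection between $B_nX$ and the Borel set of $\iota$-sorted tuples in $X^n$, so that $B_nX$ is itself standard Borel; alternatively one could phrase the section via the Lusin--Souslin theorem, but the direct computation above has the advantage of not presupposing that $B_nX$ is standard Borel.
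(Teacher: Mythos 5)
Your proof is correct, but the half that requires real work --- the measurable section --- is argued by a genuinely different route from the paper's. The paper also sorts, but it packages the argument as a factorization: it builds a measurable sorting map $i\colon X^n\to X^n$, observes that set-theoretically $i=s\circ q$, and then gets measurability of $s$ for free from the cited fact that $\Sigma_{B_nX}$ is exactly the quotient $\sigma$-algebra $\{U\mid q^{-1}(U)\in\Sigma_{X^n}\}$ (a point-process result of Macchi and Moyal); measurability of $q$ itself is likewise deferred to that literature. You instead work directly with the generators: you verify $q$ is measurable by writing $q^{-1}(A^U_k\cap B_nX)$ as a finite union of rectangles, and you build the section coordinatewise via order statistics, computing $(\iota\circ s_i)^{-1}\bigl((-\infty,t]\bigr)=\bigcup_{k=i}^{n}\bigl(A^{V_t}_k\cap B_nX\bigr)$ so that measurability falls out of the generating sets without ever invoking the quotient-$\sigma$-algebra characterization. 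The trade-off: the paper's argument is shorter and records a stronger, reusable fact about $\Sigma_{B_nX}$, but leans on an external citation; yours is self-contained (modulo the embedding of a standard Borel space into $\Real$, which is immediate from the paper's own definition of standard Borel) and avoids having to check measurability of the full sorting map on $X^n$, at the cost of the order-statistics bookkeeping you flag. Both arguments yield the same by-product, namely that $B_nX$ is a measurable retract of $X^n$ and hence standard Borel, which the paper extracts in the proposition that follows.
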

\begin{proof}
  The idea is that we can regard $B_nX$ as a space of sorted lists in $X^n$, as is common in practice in databases. 
  Any standard Borel space is either isomorphic to the reals or countable and discrete.
  All of these spaces have a measurable total order $(<)\subseteq X\times X$.
  There may or may not be a canonical choice for a particular~$X$, but it doesn't matter for the sake of this proof.
  
  We can then use this within the language of measurable functions to write a measurable sorting function $i:X^n\to X^n$ that takes a list and returns the sorted version of it.
  (For example, if $n=2$, let $i(x,y)=(x,y)$ if $x<y$ and otherwise $(y,x)$.)

  As a set-theoretic function, this sorting function $i:X^n\to X^n$ factors through the quotient map,
  \[i\ =\ X^n\xrightarrow{q} B_nX\xrightarrow{s} X^n\text.\]
  It remains to show that these two functions are measurable.
  That $q$ is measurable is well-known, and in fact the $\sigma$-algebra on $B_nX$ can be characterized as $\Sigma_{B_nX}=\{U~|~q^{-1}(U)\in \Sigma_{X^n}\}$~\cite{macchi-bags-measurable,moyal-bagsmeas}. 
  Finally, to see that $s$ is measurable, suppose $U\in \Sigma_{X^n}$, then
  we must show that $s^{-1}(U)\in \Sigma_{B_nX}$, i.e.~that
  $q^{-1}(s^{-1}(U))\in \Sigma_{X^n}$. Since $sq=i$, and $i$ is measurable, we are done. 
\end{proof}

\begin{proposition} The space $(BX, \Sigma_{BX})$ is standard Borel when $X$ is standard Borel.
\end{proposition}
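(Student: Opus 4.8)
The plan is to reduce to the components $B_nX$ of the decomposition $BX=\biguplus_{n\in\mathbb N}B_nX$ and to exhibit each $B_nX$ as a Borel retract of the standard Borel space $X^n$. First I would dispatch the degenerate case: if $X=\varnothing$ then $BX$ consists of the single empty bag, a one-point discrete space, which is standard Borel. So I may assume $X$ is non-empty, which is exactly the hypothesis required by Lemma~\ref{lem:fold}. Since standard Borel spaces are closed under countable coproducts (noted in Section~\ref{sec:prelims}), and $BX=\biguplus_n B_nX$, it then suffices to prove that each $(B_nX,\Sigma_{B_nX})$ is standard Borel. Along the way I would check the routine point that the coproduct $\sigma$-algebra on $\biguplus_n B_nX$ agrees with $\Sigma_{BX}$ of Definition~\ref{bag-def}: each generating set $A^U_k$ restricts to a measurable set on every component, and conversely the component $\sigma$-algebras are generated by such restrictions.

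For each fixed $n$, Lemma~\ref{lem:fold} supplies a measurable quotient $q:X^n\to B_nX$ together with a measurable section $s:B_nX\to X^n$ satisfying $q\circ s=\mathrm{id}$, making $B_nX$ a measurable retract of $X^n$; and $X^n$ is standard Borel because standard Borel spaces are closed under finite products. The key claim is that such a retract is itself standard Borel. Here $s$ is injective, being a section, and its image is the equalizer of $s\circ q$ and $\mathrm{id}_{X^n}$,
\[
  s(B_nX)\ =\ \{\, t\in X^n \mid s(q(t))=t \,\}.
\]
Because equality on the standard Borel space $X^n$ is measurable (equivalently, the diagonal is Borel), this set is the preimage of the diagonal under the measurable map $t\mapsto(s(q(t)),t)$, hence a Borel subset of $X^n$. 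Restricting the codomain, $s:B_nX\to s(B_nX)$ is a measurable bijection whose inverse is the restriction of the measurable map $q$, so it is a measurable isomorphism. Since a Borel subset of a standard Borel space is again standard Borel, $s(B_nX)$ is standard Borel, and therefore so is $B_nX$.

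The main obstacle is this retract argument, and within it the point that the image of the section is Borel; this is precisely where measurability of equality on standard Borel spaces does the essential work. Concretely, that image is exactly the set of sorted tuples appearing in the proof of Lemma~\ref{lem:fold}, and one can alternatively verify directly that it is Borel using the measurable total order. The only other care needed is the identification of the coproduct $\sigma$-algebra with $\Sigma_{BX}$, which is routine. With these pieces assembled, closure under countable coproducts delivers the result for $BX$.
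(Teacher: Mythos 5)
Your proposal is correct and follows essentially the same route as the paper: decompose $BX=\biguplus_n B_nX$, exhibit each $B_nX$ as a measurable retract of the standard Borel space $X^n$ via Lemma~\ref{lem:fold}, and conclude by closure under countable coproducts. The only difference is that the paper simply cites ``any retract of a standard Borel space is standard Borel'' as known, whereas you prove it via the Borel image of the section and the Borel isomorphism theorem for Borel subsets --- a correct elaboration rather than a different approach.
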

\begin{proof}
If $X$ is standard Borel, then so is $X^n$, since any countable product of standard Borel spaces is standard Borel. So each $B_n(X)$ is standard Borel, since any retract of a standard Borel space is standard Borel. So $B(X)$ is a countable union of standard Borel spaces, hence also standard Borel.
\end{proof}

\noindent
\textbf{Note.} Since all the spaces involved in probabilistic databases are standard Borel,
in the remainder of this paper we only consider standard Borel spaces.

\subsection{Measurable structural recursion on bags}
All the computations we were interested in relied on a form of 
structural recursion over bags, which we now introduce.
This is reminiscent of the fold construction from functional programming.
For example, given a list of integers, it is possible to
compute the sum of its elements by extracting elements starting at the head and calculating a running
sum until we reach the tail.
In this way the function \textsf{sum} can be defined as $\textsf{fold}(\textsf{plus},0)$ where
$\textsf{plus} : \mathbb N \times \mathbb N \to \mathbb N$ plays the role of the accumulating function and 
$0$ is the initial argument provided to \textsf{plus} along with the head of the list. If the list
being considered is empty, the result of the fold is simply the initial argument provided, which
in this case is $0$.
The same approach works for bags too, provided that
our accumulating function will need to be such that the order in which it receives its arguments does
not matter. This leads us to the definition of commutative functions. 
In the rest of this Section we define what it means to \emph{measurably} fold a bag.

\begin{definition}
A function $f : X \times Y \to Y$ is \emph{commutative} if
$$\forall x_1, x_2 \in X, y \in Y.\ f(x_1, f(x_2, y)) = f(x_2, f(x_1, y)).$$
\end{definition}

\begin{definition}
Let $f : X \times Y \to Y$ be a measurable commutative function. Then define
$\mathsf{fold}_f : Y \times BX \to Y$ to be the function which applies the
accumulating function $f$ with initial value $y \in Y$ to each element of $b \in BX$ one-by-one.
Note that the order of selection of elements does not matter as $f$ is commutative.
We first define $\mathsf{fold}^n_f$ for bags of size $n$. When $n=0$, 
$\mathsf{fold}^0_f (y, \varnothing) = y$. For non-zero $n$,
\begin{align*}
&\mathsf{fold}^n_f : Y \times B_n X \to Y\\
&\mathsf{fold}^n_f (y,\llc x_1, \dots, x_n \rrc) = f(x_1, f(x_2, \dots f(x_n, y)\dots)).
\end{align*}
From this, we obtain \textsf{fold} as the unique function coming out of the coproduct of 
each of the $\mathsf{fold}^n_f$'s above, giving us
$$\mathsf{fold}_f : Y \times BX \to Y.$$
\end{definition}
\begin{theorem}\label{thm:fold}
$\mathsf{fold}_f$ is measurable for commutative measurable $f : X \times Y \to Y$.
\end{theorem}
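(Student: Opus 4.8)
The plan is to exploit the coproduct decomposition $BX=\biguplus_{n\in\Nat} B_nX$, which distributes through the product to give $Y\times BX=\biguplus_{n\in\Nat}(Y\times B_nX)$. Since a function out of a countable coproduct of measurable spaces is measurable exactly when each of its restrictions is, it suffices to show that every $\mathsf{fold}^n_f:Y\times B_nX\to Y$ is measurable. The case $n=0$ is immediate, since $\mathsf{fold}^0_f$ is the projection $Y\times\{\varnothing\}\to Y$. If $X$ is empty then $B_nX$ is empty for all $n\geq 1$ and there is nothing left to prove, so I may assume $X$ is non-empty and invoke Lemma~\ref{lem:fold}.

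For $n\geq 1$, I would first introduce the \emph{ordered} fold $g_n:Y\times X^n\to Y$ defined by $g_n(y,x_1,\dots,x_n)=f(x_1,f(x_2,\dots f(x_n,y)\dots))$. This is a finite composite of the measurable function $f$ with projections, hence manifestly measurable. The key observation is that $g_n$ is constant on the fibres of the quotient map $q:X^n\to B_nX$, so that it descends along $q$ to exactly $\mathsf{fold}^n_f$; in categorical terms, $g_n=\mathsf{fold}^n_f\circ(\id_Y\times q)$. Lemma~\ref{lem:fold} supplies a measurable section $s:B_nX\to X^n$ with $q\circ s=\id$, and composing on the right gives $\mathsf{fold}^n_f=g_n\circ(\id_Y\times s)$. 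As a composite of measurable maps this is measurable, completing the argument.

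The step I expect to require the most care is the claim that $g_n$ factors through $q$, that is, that the value $f(x_1,f(x_2,\dots f(x_n,y)\dots))$ is invariant under permuting $x_1,\dots,x_n$. The definition only provides the pairwise identity $f(x_1,f(x_2,y))=f(x_2,f(x_1,y))$. To upgrade this to full permutation-invariance I would argue that it lets me swap any two \emph{adjacent} arguments $x_i,x_{i+1}$ in the nested expression: writing $y'=f(x_{i+2},\dots f(x_n,y)\dots)$ for the accumulated tail, the commutativity axiom applied at $y'$ yields $f(x_i,f(x_{i+1},y'))=f(x_{i+1},f(x_i,y'))$. Since adjacent transpositions generate the symmetric group $S_n$, iterating this shows the value is independent of the order of the $x_i$, which is precisely what is needed both for $\mathsf{fold}^n_f$ to be well defined on bags and for the factoring to hold. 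Once this is in place, measurability is an immediate consequence of the measurable section supplied by Lemma~\ref{lem:fold}.
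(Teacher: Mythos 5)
Your proposal is correct and follows essentially the same route as the paper: both reduce to the summands $B_nX$, define the measurable ordered fold on $Y\times X^n$, and precompose with the measurable section from Lemma~\ref{lem:fold}, with commutativity ensuring the result is independent of the chosen representative. Your treatment of the adjacent-transposition argument and the $n=0$ and empty-$X$ edge cases is more explicit than the paper's, but the underlying idea is identical.
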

\begin{proof}
  We use Lemma~\ref{lem:fold}.
  First we define fold on lists:
\begin{align*}
&\mathsf{ofold}^n_f : Y \times X^n \to Y\\
&\mathsf{ofold}^n_f (y,( x_1, \dots, x_n )) = f(x_1, f(x_2, \dots f(x_n, y)\dots)).
\end{align*}
This is clearly measurable, because it is just built from composition of measurable functions and product operations. 
Next we note that for any section $B_nX\to X^n$ of the quotient map,
\[\mathsf{fold}^n_f\ =\ Y\times B_nX\xrightarrow{Y\times s}Y\times X^n\xrightarrow{\mathsf{ofold^n_f}} Y\]
The commutativity of $f$ means that the choice of section~$s$ does not matter.
This again is a composition of measurable functions and so $\mathsf{fold}^n_f$ is measurable. The full function $\mathsf{fold}_f:Y\times BX\to Y$ is a copairing of measurable functions, and so it is measurable too. 
\end{proof}
 
\subsection{The space of bags as the free commutative monoid}\label{sec:comm-mon}
In order to define a monoidal structure on the space of bags
we first consider the function \hbox{$\mathsf{add} : X \times BX \to BX$}
which adds a single element to a bag, incrementing its multiplicity by one.
It is clear that \textsf{add} is commutative.
%\begin{align*}
%&\textsf{add} : X \times BX \to BX \\
%&\textsf{add} (x, b) = b \cup \llc x \rrc
%\end{align*}
\begin{proposition}\label{prop:add-meas}
$\mathsf{add} : X \times BX \to BX$ is measurable.
\end{proposition}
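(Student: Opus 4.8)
The goal is to show that $\mathsf{add}\colon X\times BX\to BX$ is measurable, which by definition of the $\sigma$-algebra $\Sigma_{BX}$ amounts to showing that the preimage of each generating set $A^U_k=\{b\mid b\text{ contains exactly }k\text{ elements in }U\}$ lies in $\Sigma_{X\times BX}$. Indeed, since $\Sigma_{BX}$ is the least $\sigma$-algebra containing the sets $A^U_k$, and preimages commute with complements and countable unions, it suffices to check measurability against these generators alone.

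The plan is to analyse how adding a single element to a bag affects its count in a measurable set $U\in\Sigma_X$. First I would observe the following case split. If we add an element $x\in U$ to a bag $b$, then the number of $U$-elements goes up by one; if $x\notin U$, the count is unchanged. Concretely,
\[
  \mathsf{add}^{-1}(A^U_k)\ =\ \big(U\times A^U_{k-1}\big)\ \uplus\ \big((X\setminus U)\times A^U_{k}\big)
\]
for $k\ge 1$, and for $k=0$ the first summand is absent, giving just $(X\setminus U)\times A^U_0$. Each piece on the right is a product of a set in $\Sigma_X$ (namely $U$ or its complement $X\setminus U$, both measurable) with a generating set $A^U_j\in\Sigma_{BX}$, hence a measurable rectangle in the product $\sigma$-algebra $\Sigma_{X\times BX}$. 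A finite disjoint union of measurable rectangles is measurable, so $\mathsf{add}^{-1}(A^U_k)\in\Sigma_{X\times BX}$.

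Having verified the preimage condition on the generators, I would conclude by the standard generator argument: the collection $\{V\subseteq BX\mid \mathsf{add}^{-1}(V)\in\Sigma_{X\times BX}\}$ is a $\sigma$-algebra (preimages preserve the Boolean operations), and it contains every $A^U_k$, so it contains the $\sigma$-algebra they generate, namely all of $\Sigma_{BX}$. Therefore $\mathsf{add}$ is measurable.

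The only real subtlety, and the step I would treat most carefully, is pinning down the preimage identity above — in particular getting the boundary case $k=0$ right and confirming that the two summands really are disjoint and exhaustive. Everything else is routine: the measurability of $U$ and its complement is immediate, products of measurable sets are measurable rectangles, and the closure of the preimage-good sets under the $\sigma$-algebra operations is the usual boilerplate. No appeal to Lemma~\ref{lem:fold} or to standard Borel structure is needed here; the argument works at the level of the generating sets for any measurable space $X$.
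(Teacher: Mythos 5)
Your proposal is correct and follows essentially the same route as the paper: both compute $\mathsf{add}^{-1}(A^U_k)$ by splitting on whether the added element lies in $U$, obtaining $\overline U\times A^U_0$ for $k=0$ and $(U\times A^U_{k-1})\cup(\overline U\times A^U_k)$ for $k>0$, and conclude measurability from these rectangles. Your explicit remark that it suffices to check preimages on the generating sets $A^U_k$ is left implicit in the paper but is exactly the standard argument intended there.
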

\begin{proof}
Consider a measurable set $A^U_k \in \Sigma_{BX}$. This is the set of bags with exactly $k$ elements
belonging to $U$. Then $\textsf{add}^{-1}(A^U_k)$ is the set of pairs $(x,b)$ such that 
$\textsf{add}(x,b) \in A^U_k$. In other words, each bag in $A^U_k$ is decomposed into a set
of pairs consisting of an element from the bag and the remaining bag. We consider the cases when
$k=0$ and when $k > 0$. In both these cases the inverse image map is in $\Sigma_{X \times BX}$.
\begin{itemize}
\item $k=0$: Here we consider the set of bags such that no element belongs to $U$. Then it is
guaranteed that any element removed from the bag will be in $\overline U$ and the remaining bag
still in $A^U_0$, resulting in the inverse map being $\overline U \times A^U_0$, which is in
$\Sigma_{X \times BX}$.
\item $k>0$:
Here each bag in the set has a non-zero number of elements in $U$ and as a result we
have two further cases depending on whether or not the element
extracted from the bag is in $U$. If it is not in $U$, the pair consisting
of the element and the remaining bag belongs to $\overline U \times A^U_k$. If it belongs to $U$,
the pair is an element of $U \times A^U_{k-1}$. And so the inverse image of $A^U_k$ under \textsf{add}
is the union of these two sets. Each of these sets is an element of $\Sigma_{X \times BX}$; consequently
so too is their union.
\end{itemize}
\end{proof}

With \textsf{add} as an accumulating function we can define the disjoint union of two bags
as a measurable function by considering the 
fold of \textsf{add} where one bag provides
all the new elements to be \textsf{add}ed to the other bag, which acts as the base case.
\begin{align*}
\uplus& : BX \times BX \to BX\\
\uplus&\ \!(b_1,b_2) = \textsf{fold}_\textsf{add}(b_1,b_2)
\end{align*}

\begin{theorem}\label{thm:free-mon}
For any standard Borel space $X$, $(BX, \uplus, \varnothing)$ is a free commutative monoid.
\end{theorem}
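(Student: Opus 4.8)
The plan is to verify two things: that $(BX,\uplus,\varnothing)$ is a commutative monoid object in the category of standard Borel spaces, and that it satisfies the universal property of the free such monoid on $X$. The monoid axioms---associativity and commutativity of $\uplus$, and the fact that $\varnothing$ is a two-sided unit---are all equations between the underlying set-theoretic functions, and they hold because bags famously form the free commutative monoid on a set in $\mathbf{Set}$. Since equality of measurable functions is just equality of their underlying functions, these laws transfer immediately to our setting once we know the operations are measurable. Measurability of $\uplus=\mathsf{fold}_{\mathsf{add}}$ follows from Theorem~\ref{thm:fold}, using that $\mathsf{add}$ is commutative (remarked above) and measurable (Proposition~\ref{prop:add-meas}); the unit $\varnothing$ is a point of $BX$ and hence trivially measurable. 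The insertion of generators $\eta\colon X\to BX$, $x\mapsto\llc x\rrc$, is measurable because it factors as $X=X^1\to B_1X\hookrightarrow BX$, and the quotient map is measurable by Lemma~\ref{lem:fold}.

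For the universal property, let $(M,\cdot,e)$ be any commutative monoid in standard Borel spaces---so $M$ is standard Borel and $\cdot\colon M\times M\to M$ is measurable---and let $f\colon X\to M$ be measurable. I define the candidate mediating map by folding $f$ into the multiplication of $M$: set $g\colon X\times M\to M$, $g(x,m)=f(x)\cdot m$, and put
\[
\hat f\colon BX\to M,\qquad \hat f(b)=\mathsf{fold}_g(e,b).
\]
The function $g$ is commutative precisely because $M$ is associative and commutative, since $g(x_1,g(x_2,m))=f(x_1)\cdot f(x_2)\cdot m=g(x_2,g(x_1,m))$, and it is measurable as a composite of $f$ and $\cdot$. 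The crucial point---and the reason this theorem is more than a set-theoretic triviality---is that $\hat f$ is then measurable, which is exactly the content of Theorem~\ref{thm:fold}.

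It remains to check the three defining properties of the mediating map, all of which reduce to the classical set-level argument. First, $\hat f(\llc x\rrc)=g(x,e)=f(x)\cdot e=f(x)$, so $\hat f\circ\eta=f$. Second, $\hat f$ is a monoid homomorphism: $\hat f(\varnothing)=e$ is immediate from the base case of $\mathsf{fold}$, and $\hat f(b_1\uplus b_2)=\hat f(b_1)\cdot\hat f(b_2)$ follows by induction on the size of $b_1$, unfolding $\uplus=\mathsf{fold}_{\mathsf{add}}$ and repeatedly using associativity and commutativity of $M$. Third, uniqueness holds because every bag is a finite union $\llc x_1\rrc\uplus\cdots\uplus\llc x_n\rrc$ of singletons, so any monoid homomorphism $h$ with $h\circ\eta=f$ is forced to satisfy $h(b)=f(x_1)\cdots f(x_n)=\hat f(b)$. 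I expect the only genuinely delicate step to be the homomorphism property $\hat f(b_1\uplus b_2)=\hat f(b_1)\cdot\hat f(b_2)$, where the induction must interleave the folding of $b_1$ with that of $b_2$; here commutativity of both $g$ and $M$ is what makes the argument go through.
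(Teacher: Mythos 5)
Your proposal is correct and follows essentially the same route as the paper: the paper also defines the mediating homomorphism as $f^\ast(b)=\mathsf{fold}_{\mathsf{monAcc}}(e_Y,b)$ where $\mathsf{monAcc}(x,y)=f(x)+_Y y$, relying on Theorem~\ref{thm:fold} for measurability. You simply spell out the verifications (commutativity of the accumulator, the homomorphism property, uniqueness) that the paper leaves implicit.
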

\begin{proof}
First note that $(BX, \uplus, \varnothing)$ is a commutative monoid.
Given any commutative monoid $(Y, +_Y, e_Y)$
and a map $f : X \to Y$ we can define $g = \textsf{fold}_\textsf{monAcc} : Y \times BX \to Y$ where
\textsf{monAcc} is the composite
$$\textsf{monAcc} = X \times Y \xrightarrow{f \times Y} Y \times Y \xrightarrow{+_Y} Y.$$
From this we obtain the unique commutative monoid homomorphism $f^\ast(b) = g(e_Y,b) : BX \to Y$.
\end{proof}
As an aside, we remark that a fold-like operation is sometimes regarded as immediate from the free (commutative) monoid property.
For example, in a cartesian closed category with list objects $X^*$, the space~${Y\to Y}$ is a monoid (under composition), and hence any map $X\to (Y\to Y)$ induces a canonical monoid homomorphism $X^*\to (Y\to Y)$, which is a curried form of fold. 
However, the category of measurable spaces is \emph{not} cartesian closed~\cite{aumann,quasiborel}, and so
we have recorded the existence of fold as a separate fact to the free commutative monoid property.

%%% Local Variables:
%%% mode: latex
%%% TeX-master: "pdb-monad"
%%% End:

\subsection{The bag monad}

We now use this universal property to describe the structure of the bag monad on standard Borel spaces. 
\begin{itemize}
\item The unit $\eta:X\to B(X)$ is given by
  the singleton bag: $\eta(x)=\llc x\rrc$.
  This is measurable because $\eta^{-1}(A^U_k)$ is $\overline U$ if $k=0$, $U$ if $k=1$, and $\varnothing$
otherwise.
\item The bind is given as follows. Informally, for $f:X\to B(Y)$,
  let $f\dnib:B(X)\to B(Y)$, 
  \[f\dnib \llc x_1\dots x_n\rrc = \biguplus_{i=1}^n f(x_i)\]
  Formally, we apply fold to the composite measurable function 
  \[X\times B(Y) \xrightarrow{f\times B(Y)} B(Y)\times B(Y)\xrightarrow \uplus B(Y)\]
  to get a measurable function $B(X)\times B(Y)\to B(Y)$, and then pass in the empty set as the initial argument. 
Equivalently, the monad multiplication can be given by applying fold to the function
  \[\uplus : B(X)\times B(X)\to B(X)\]
  to get a measurable function $\mu:B(B(X))\to B(X)$, by passing in $\varnothing$ as the initial argument.
\item The strength $X\times B(Y)\to B(X\times Y)$ is given by applying fold to the function
  \[(\mathsf{add},\pi_2):B(X\times Y) \times X\times Y \to B(X\times Y)\times X
  \]
  to get a measurable function $B(X\times Y)\times X\times B(Y)\to B(X\times Y)\times X$, and passing in the empty set as the initial argument and projecting the first result. 
\end{itemize}

As an aside we note that in the statistics literature, it is quite common to regard $B(X)$ as a space of integer valued measures on $X$. With this perspective, regarding $B$ as a monad of measures, the strong monad structure on~$B$ is entirely analogous to the monad structure of the Giry monad~$P$. 
\begin{theorem}\label{thm:bag-monad}
$(B, \eta, \mu)$ is a strong monad on the category of standard Borel spaces.
\end{theorem}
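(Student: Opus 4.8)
The plan is to observe that almost all of the work has already been done: the bag construction is the free commutative monoid monad, which is a standard strong monad on the category of sets, so every monad and strength law we must verify is a known equation between set-theoretic functions. What the category of standard Borel spaces adds is only the requirement that the structure maps be measurable and that $B$ preserve standard Borel spaces. Since two measurable functions are equal exactly when their underlying set-functions are equal, once measurability is in place the laws transfer verbatim from $\mathbf{Set}$. So the proof splits into a measurability part and a law-checking part, and the law-checking part will be essentially free.

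First I would assemble the measurability facts. The functor $B$ sends standard Borel spaces to standard Borel spaces by the proposition above, and for a measurable $f\colon X\to Y$ the action $B(f)=(\eta_Y\circ f)\dnib$ is measurable because $\eta$ is measurable (shown above) and bind preserves measurability. The unit $\eta$ was shown measurable directly, and the bind---hence the multiplication $\mu$---is measurable because it is obtained by applying $\mathsf{fold}$ to the measurable commutative map $X\times B(Y)\xrightarrow{f\times B(Y)}B(Y)\times B(Y)\xrightarrow{\uplus}B(Y)$, invoking Theorem~\ref{thm:fold}; here $\uplus$ is itself a fold of the measurable map $\mathsf{add}$ (Proposition~\ref{prop:add-meas}). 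The strength is likewise a fold of a measurable map, so it too is measurable by Theorem~\ref{thm:fold}.

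With measurability secured, I would check the Kleisli-triple laws $\eta_X\dnib=\id_{BX}$, $f\dnib\circ\eta_X=f$, and $(g\dnib\circ f)\dnib=g\dnib\circ f\dnib$, together with the naturality of the strength and its coherence diagrams relating it to the associator, the left unit isomorphism, $\eta$, and $\mu$. Each of these is an equation between the measurable maps just discussed, and each holds on underlying sets because $B$ is the free-commutative-monoid monad on $\mathbf{Set}$; the commutativity, associativity and unitality of $\uplus$ on which these set-level identities rely are exactly the free commutative monoid structure of Theorem~\ref{thm:free-mon}. Consequently the identities hold in standard Borel spaces as well.

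I do not expect a genuine obstacle: the conceptual step is recognising that the problem reduces to the set level, and the only real labour---measurability of $\mu$ and of the strength---is already packaged by the fold theorem. If anything requires care it is the bookkeeping in the strength coherence diagrams, but even there the verification is pointwise and carries over unchanged from the well-understood set-theoretic bag monad.
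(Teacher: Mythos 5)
Your proposal is correct and follows essentially the same route as the paper: the paper establishes measurability of $\eta$ directly, of the bind/multiplication and the strength via the fold construction (Theorem~\ref{thm:fold}) applied to measurable accumulating maps built from $\mathsf{add}$ and $\uplus$, and leaves the monad and strength laws to transfer from the underlying set-level bag monad exactly as you describe. Your explicit remark that the laws carry over because the forgetful functor to sets is faithful is a point the paper leaves implicit, but it is the same argument.
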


\section{Measurable query operations on bags}\label{sec:queries}

\setcounter{footnote}{0}
In the standard theory of database modelling, relations are assumed to be sets, disallowing the existence
of duplicates. Most database software, however, relax this restriction, often to save the cost of
duplicate elimination. BALG (``bag algebra''), an algebra for manipulating bags, was first introduced in~\cite{balg}.
In that paper BALG was presented as an extension of the nested relation algebra (RALG), with a focus on the study
of its expressive power and relative complexity to RALG. The authors showed that BALG as a
query language was more expressive than RALG.

In this Section we will consider the entire BALG query language and show that it extends
to measurable functions on bags. 

For now we briefly review the query language BALG; we discuss these queries and their semantics in more
detail later in this Section. The \textsc{singleton} operation returns a singleton bag consisting of the input.
Restructuring rows of tables is possible using the $\textsc{map}_f$ query, which applies the
function $f$ to every row in the table. The queries \textsc{product},
\textsc{dunion}, \textsc{difference}, \textsc{union}, and \textsc{intersect}
compute the product, disjoint union, difference, union, and intersection of the input
tables respectively.
The \textsc{project} query projects out user-specified columns.
\textsc{flatten} transforms a bag of bags to a bag consisting of the disjoint unions of
all the internal bags. Duplicate elimination,
or deduplication, is possible using the \textsc{dedup} query. Finally, we can compute
the bag of sub-bags of any bag using
\textsc{powerbag}, and the bag of subsets of any bag using \textsc{powerset}.

Given the expressiveness of the BALG query language it comes as no surprise that many operations
can be defined in terms of each other. For example, the powerset of a bag is simply the
deduplicated version of the powerbag.
It is also known that the union and intersection of bags can be defined using the
disjoint union and difference operators.
To this end we will only consider the following
minimal subset of BALG queries, in terms of which all other queries can be defined:
\{{\sc singleton},  {\sc flatten},  {\sc map},  {\sc product},  {\sc project},  {\sc select},  {\sc dunion},
   {\sc difference},  {\sc powerbag},  {\sc dedup}\}.

\textbf{Previous work:} In their work, Grohe and Lindner~\cite{grohe-lindner} considered
BALG$^1$\footnote{It is called BALG$^1$, with superscript 1.}, a subset of BALG restricted to bags of
nesting level 1. That is, the queries of BALG$^1$ are defined on bags of type $BX$ where $X$ cannot
have another type $BY$ in its definition. The minimal set of queries for BALG$^1$ is the same set we
consider here minus \textsc{flatten} and \textsc{powerbag} since they operate on bags of bags.
In their work Grohe and Lindner showed that BALG$^1$ queries extend to measurable functions on bags.
We generalise their results and show, using our monadic and monoidal structure on bags, that all of
BALG extends to measurable functions on bags, and give a clearer picture of how it comes together. Furthermore, we
discuss the actions of grouping and aggregation as measurable queries in BALG.

\subsection{Measurability of BALG queries}

We provide a semantics to BALG queries by mapping each query to a measurable
function on bags. The measurability of the semantics of the 
{\sc singleton},  {\sc flatten},  {\sc map},  {\sc product},  {\sc project}, and {\sc select} queries
is guaranteed by defining their semantics as monad comprehensions.
The measurability of the semantics of the remaining 
queries, {\sc dunion}, {\sc difference},  {\sc powerbag}, and {\sc dedup},
is obtained by defining their semantics using our fold construction introduced in Section~\ref{sec:bag}.
Note that commutativity holds for all the measurable accumulating functions in the
\textsf{fold}-based definitions to follow.
The condition of commutativity is easy to check. 

\paragraph{Bagging and flattening} 
The semantics for the \textsc{singleton} and \textsc{flatten} queries are given by the
unit $\eta^B$ and multiplication $\mu^B$ maps for the bag monad $B$.
The measurability of these maps is proved in Theorem \ref{thm:bag-monad}. 
\begin{align*}
&\llb \textsc{singleton} \rrb : X \to BX \qquad\qquad\qquad\qquad
\llb \textsc{flatten} \rrb : B^2X \to BX \\
&\llb \textsc{singleton} \rrb (x) = \eta^B_X(x) = \llc x \rrc\qquad\qquad
\llb \textsc{flatten} \rrb (b) = \mu^B_X(b)
\ \\
\intertext{
\paragraph{Restructuring}
The \textsc{map}$_f$ query comes parametrized with a measurable function $f : X \to Y$
and its semantics is simply the functorial action of $B$ on $f$, which yields
yet another measurable map.
This can equivalently be written using monad comprehension notation.
}
%&\llb \textsc{map}_f \rrb : BX \to BY \\
%&\llb \textsc{map}_f \rrb (b) = B(f)(b)
&\llb \textsc{map}_f \rrb : BX \to BY \qquad\llb \textsc{map}_f \rrb (b) = B(f)(b)
\intertext{
This can be neatly written using monad comprehension notation:
$\llb \textsc{map}_f \rrb (b) = \llc f(x) \mid x \ot b \rrc$.}
\intertext{
\vspace{-2em}
\paragraph{Product and projection}
Monad comprehensions make it straightforward to define the \textsc{product}
of two bags where the arity of the resultant schema is the sum of the arities of the input schemas.
}
&\llb \textsc{product} \rrb : B(X_1 \times \dots \times X_m) \times B(Y_1 \times \dots \times Y_n)
\to B(X_1 \times \dots \times X_m \times Y_1 \times \dots \times Y_n)\\
&\llb \textsc{product} \rrb (b_1, b_2) =
\llc (x_1,\dots,x_m,y_1,\dots,y_n) \mid (x_1,\dots,x_m) \ot b_1,\ (y_1,\dots,y_n) \ot b_2 \rrc
\intertext{
A similar treatment can be given to the \textsc{project}$_{i_1, \dots, i_k}$ query
which projects out the $i_1,\dots,i_k$ indices of the input schema.
}
&\llb \textsc{project}_{i_1, \dots, i_k} \rrb : B(X_1 \times \dots \times X_n) \to
                                                B(X_{i_1} \times \dots \times X_{i_k}) \\
&\llb \textsc{project}_{i_1, \dots, i_k} \rrb (b) =
 \llc (x_{i_1},\dots,x_{i_k}) \mid (x_1,\dots,x_n) \ot b \rrc
\intertext{
\paragraph{Selection}
$\textsc{select}_\psi$ is parametrized by a measurable Boolean predicate
$\psi : X \to \textsf{Bool}$ and filters out the rows in the input table satisfying $\psi$. 
We can lift $\psi$ to the measurable function $\hat\psi : X \to B1$, where $1 = \{ \star \}$ is
the singleton space with a unique element. $\hat\psi$ evaluates to $\llc \star \rrc$ (resp. $\varnothing$)
when $\psi$ evaluates to \textsf{True} (resp. \textsf{False}). This construction enables us
to define the semantics of $\textsc{select}_\psi$ as a monad comprehension where rows
not satisfying $\psi$ do not get included due to $\hat\psi$ evaluating to the empty bag.
(Define $\textsf{filter}_\psi$ to be this map.)
}
&\llb \textsc{select}_\psi \rrb : BX \to BX \qquad \llb \textsc{select}_\psi \rrb (b) = \textsf{filter}_\psi (b) = \llc x \mid x \ot b,\ \star \ot \hat\psi(x) \rrc
%&\llb \textsc{select}_\psi \rrb : BX \to BX \\
%&\llb \textsc{select}_\psi \rrb (b) = \textsf{filter}_\psi (b) = \llc x \mid x \ot b,\ \star \ot \hat\psi(x) \rrc
\intertext{
In monad comprehension syntax, for a monad with a given zero element (e.g.~$\varnothing\in BX$), a shorthand notation                  $ \llc x \mid x \ot b,\ \psi(x) \rrc                           $    is often used. 
\paragraph{Disjoint union}
\textsc{dunion} simply computes the disjoint union of its arguments.
In Section~\ref{sec:comm-mon} we defined the measurable disjoint union $\uplus : BX \times BX \to BX$ as
$\textsf{fold}_\textsf{add}$.
}
&\llb \textsc{dunion} \rrb : BX \times BX \to BX \qquad \llb \textsc{dunion} \rrb (b_1, b_2) = b_1 \uplus b_2 = \textsf{fold}_\textsf{add}(b_1, b_2)
%&\llb \textsc{dunion} \rrb : BX \times BX \to BX \\
%&\llb \textsc{dunion} \rrb (b_1, b_2) = b_1 \uplus b_2 = \textsf{fold}_\textsf{add}(b_1, b_2)
\intertext{
\paragraph{Bag difference}
Bag difference is the first operation we consider that is not quite immediate from the monad and monoid structure. The idea is that, for instance,
$\llb \textsc{difference} \rrb(\llc1,1,2\rrc, \llc1,2,3\rrc)=\llc 1\rrc$.
Defining bag difference as a fold requires a little care. To this end, we first
define the measurable function \textsf{remove} which takes an element and a bag as input and returns either
the same bag if that element did not belong to the bag, or a modified bag with one fewer instance of the
given element. 
\textsf{remove} is defined as a fold over the accumulating function \textsf{remAcc}, which maintains
a triple as an accumulator:
\begin{itemize}
\item the value $x_\textsf{rem} \in X$ to be removed,
\item a Boolean value indicating whether or not $x_\textsf{rem}$ has already been removed (in
order to prevent us from removing $x_\textsf{rem}$ more than once),
\item and the resulting bag (which is initially empty) to which elements are added.
\end{itemize}
 Each of the three cases has been defined by combining tupling and \textsf{add}, both of which are measurable
functions. From this it follows that \textsf{remAcc} is measurable.
}
&\textsf{remAcc} : X \times ((X \times \textsf{Bool}) \times BX) \to (X \times \textsf{Bool}) \times BX \\
&\textsf{remAcc} (x, ((x_\textsf{rem}, c), b)) = 
\begin{cases}
(( x_\textsf{rem},\textsf{True}  ), \textsf{add}(x,b)) & \text{if } c = \textsf{True} \\
(( x_\textsf{rem},\textsf{True}  ), b) & \text{if } x = x_\textsf{rem} \\
(( x_\textsf{rem},\textsf{False} ), \textsf{add}(x,b)) & \text{otherwise}
\end{cases}
\intertext{
To get the final bag after removal we project out the second element of the pair returned by \textsf{remAcc}.
}
&\textsf{remove} : X \times BX \to BX \qquad
\textsf{remove} (x_\textsf{rem}, b) =
\pi_2 \left(\textsf{fold}_\textsf{remAcc}(((x_\textsf{rem}, \textsf{False}),\varnothing), b)\right)
\intertext{
Using \textsf{remove} we define the bag difference of $b_1$ and $b_2$ by letting $b_1$ be the 
initial input
and from it \textsf{remove}-ing each element in $b_2$ one-by-one. The measurability of bag difference
follows from the commutativity and measurability of \textsf{remove}.
}
%&\llb \textsc{difference} \rrb : BX \times BX \to BX \\
%&\llb \textsc{difference} \rrb (b_1, b_2) = \textsf{fold}_\textsf{remove}(b_1, b_2)
&\llb \textsc{difference} \rrb : BX \times BX \to BX \qquad\llb \textsc{difference} \rrb (b_1, b_2) = \textsf{fold}_\textsf{remove}(b_1, b_2)
\intertext{
\paragraph{Powerbag}
For example, the powerbag of the bag $\llc 1, 1 \rrc$ is
$\llc \varnothing, \llc 1 \rrc, \llc 1 \rrc, \llc 1,1 \rrc \rrc$.
                                                          The \textsc{powerbag} of a bag is defined by folding over the accumulating function
\textsf{powerAcc} where for every new element $x$ added to the accumulating bag of bags $b_0$
we first \textsf{add} $x$ to every bag in $b_0$ and then take the disjoint union with the
initial $b_0$.
We define $\textsf{add}_x$ to be the $x$-section of \textsf{add} (that is,
$\textsf{add}_x(b) = \textsf{add}(x,b)$). Sections of measurable functions are measurable.
}
&\begin{array}{@{}l@{\qquad}l}\textsf{powerAcc} : X \times B^2X \to B^2X &\llb \textsc{powerbag} \rrb : BX \to B^2X \\
  \textsf{powerAcc} (x,b_0) = b_0 \uplus B(\textsf{add}_x)(b_0)
  &\llb \textsc{powerbag} \rrb (b) = \textsf{fold}_\textsf{powerAcc}(\llc \varnothing \rrc, b)\end{array}
\intertext{
\paragraph{Deduplication}
In order to \textsc{dedup}licate a bag we recurse over its elements and add them to 
an accumulating bag one-by-one. We avoid multiplicities greater than one in the final bag
by first filtering out the value we want to add from the accumulating bag before adding it,
which is reflected in the definition of \textsf{dedupAcc}.
}
  &\begin{array}{@{}l@{\qquad}l}\textsf{dedupAcc} : X \times BX \to BX
&\llb \textsc{dedup} \rrb : BX \to BX \\     
     \textsf{dedupAcc} (x,b) = \textsf{add}(x, \textsf{filter}_{\not= x}(b))
     &\llb \textsc{dedup} \rrb (b) = \textsf{fold}_\textsf{dedupAcc}(\varnothing, b)
       \end{array}
\end{align*}
The function $\not=\!\!x : X \to \textsf{Bool}$ is measurable since
$\not=\!\!x^{-1}(\{\textsf{True}\}) = X \setminus \{x\}$ 
and
$\not=\!\!x^{-1}(\{\textsf{False}\}) = \{x\}$, both of which are measurable sets (due to
$X$ being standard Borel).
\begin{theorem}\label{thm:balg}
BALG queries yield measurable functions on bags.
\end{theorem}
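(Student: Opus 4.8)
The plan is to reduce the statement to the ten primitive operations. Every BALG query is, by construction, a finite composite of the minimal operations \{{\sc singleton}, {\sc flatten}, {\sc map}, {\sc product}, {\sc project}, {\sc select}, {\sc dunion}, {\sc difference}, {\sc powerbag}, {\sc dedup}\}, possibly interleaved with tupling and projection on the ambient product spaces. Since the composite of measurable functions is measurable, and products and projections of standard Borel spaces are measurable, it suffices to check that each of the ten primitives denotes a measurable function. I would organise this into two groups, matching the two techniques already developed in the paper.

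The first group --- {\sc singleton}, {\sc flatten}, {\sc map}, {\sc product}, {\sc project}, {\sc select} --- is handled by the monad structure directly. The semantics of {\sc singleton} and {\sc flatten} are the unit $\eta^B$ and multiplication $\mu^B$, which are measurable by Theorem~\ref{thm:bag-monad}; {\sc map}$_f$ is the functorial action $B(f)$, measurable for measurable $f$. The remaining three are written as monad comprehensions, and by the discussion of Section~\ref{sec:compr} a comprehension is literally a composite of the strength, iterated binds, and a final measurable postcomposition $B(g)$. Each such ingredient is measurable by Theorem~\ref{thm:bag-monad}, and the only parameters involved --- the map $f$, the predicate $\psi$ (via $\hat\psi$), and the equality and projection maps used inside the comprehensions --- are measurable because $X$ is standard Borel. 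Hence this entire group is measurable.

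The second group --- {\sc dunion}, {\sc difference}, {\sc powerbag}, {\sc dedup} --- is handled by the fold construction. By Theorem~\ref{thm:fold}, $\textsf{fold}_f$ is measurable whenever the accumulating function $f$ is both measurable and commutative, so for each query I would discharge these two obligations for the corresponding accumulator. Measurability of each accumulator is routine: \textsf{add} is measurable by Proposition~\ref{prop:add-meas}, its sections $\textsf{add}_x$ and the induced $B(\textsf{add}_x)$ are measurable, the filter $\textsf{filter}_{\neq x}$ is measurable since $\neq x$ has measurable preimages in a standard Borel space, and all remaining glue is tupling, projection, and case-splitting on measurable predicates. Thus \textsf{add}, \textsf{remAcc}, \textsf{powerAcc}, and \textsf{dedupAcc} are all measurable, and the four queries are therefore measurable as soon as the commutativity of each accumulator is established.

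The substantive step, and the one I expect to be the main obstacle, is precisely this commutativity check. For \textsf{add} it is immediate, since the resulting bag does not depend on insertion order. For \textsf{powerAcc} and \textsf{dedupAcc} it reduces to commutativity and associativity of $\uplus$ together with the fact that $\textsf{add}_x$ and the filters commute appropriately. The delicate case is \textsf{remAcc}, whose accumulator carries a Boolean flag recording whether the target value $x_\textsf{rem}$ has already been removed; because this flag is \emph{stateful}, commutativity is not evident from the syntax. I would verify $\textsf{remAcc}(x_1,\textsf{remAcc}(x_2,s)) = \textsf{remAcc}(x_2,\textsf{remAcc}(x_1,s))$ by a finite case analysis on the incoming state $s = ((x_\textsf{rem},c),b)$ and on whether each of $x_1,x_2$ equals $x_\textsf{rem}$: when the flag is already \textsf{True} both elements are merely \textsf{add}ed, so the two orders agree by commutativity of \textsf{add}; when the flag is \textsf{False}, at most one of $x_1,x_2$ can trigger the removal branch, and one checks in each subcase that exactly one copy of $x_\textsf{rem}$ is deleted and the remaining element is added, independently of order. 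Once this case analysis is complete, Theorem~\ref{thm:fold} applies to every accumulator, and combining the two groups with closure under composition yields the theorem.
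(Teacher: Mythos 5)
Your proposal is correct and follows essentially the same route as the paper: reduce to the minimal set of ten operations, handle \textsc{singleton}, \textsc{flatten}, \textsc{map}, \textsc{product}, \textsc{project}, \textsc{select} via the monad structure and comprehension notation (Theorem~\ref{thm:bag-monad}), and handle \textsc{dunion}, \textsc{difference}, \textsc{powerbag}, \textsc{dedup} via the fold construction (Theorem~\ref{thm:fold}) applied to measurable commutative accumulators. The only difference is that you spell out the commutativity case analysis for \textsf{remAcc}, which the paper merely asserts is ``easy to check''; your analysis of the stateful Boolean flag is correct and is a worthwhile addition rather than a deviation.
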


\subsection{Grouping and aggregation}
Consider, for example, the table $\textsf{cast} : \textsf{Actor} \times \textsf{Movie}$ from the 
database \textsf{MovieFact} introduced at the start of this paper. A natural query that one
may want to compute is, \emph{``How many movies has each actor appeared in?''} In order to calculate
the answer to this we first need to be able to group actors with the bag of all the movies they
appeared in. To this resultant table we can map a \textsf{size} function to the second column to
get the numbers we need. Here we introduce a \textsc{group} query to BALG and show that it
is a measurable operation on bags.

\begin{definition}
The $\textsc{group}_{p_1,p_2}$ query acts on tables of schema $X_1 \times \dots \times X_k$
and is parametrized by two projection functions
$p_1 : X_1 \times \dots \times X_k \to X_{i_1} \times \dots \times X_{i_m}$
and
$p_2 : X_1 \times \dots \times X_k \to X_{j_1} \times \dots \times X_{j_n}$.
The result of this query is a table with schema
$ (X_{i_1} \times \dots X_{i_m}) \times B(X_{j_1} \times \dots \times X_{j_n})$
where the elements of the first column are paired with the bag of elements they
were related to in the input table. In other words, we group the rows of the 
table by the elements in the $p_1$-projection of the table.

The measurable bag-semantics for $\textsc{group}_{p_1,p_2}$ is given by
\begin{align*}
&\llb \textsc{group}_{p_1,p_2} \rrb : B(X_1 \times \dots \times X_k) \to 
 B((X_{i_1} \times \dots X_{i_m}) \times B(X_{j_1} \times \dots \times X_{j_n})) \\
&\llb \textsc{group}_{p_1,p_2} \rrb (b) =
\llc (i, B(p_2)(\textsf{filter}_{\lambda x. p_1(x) \equiv i}(b))) \mid
                                                                                      i \ot \llb \textsc{dedup} \rrb(B(p_1)(b)) \rrc
  \\
  &\phantom{\llb \textsc{group}_{p_1,p_2} \rrb (b) } = 
    \llc (i, \llc p_2(x)~|~x\leftarrow b,p_1(x)=i\rrc)~ |~ i \leftarrow \llb\textsc{dedup}\rrb(\llc p_1(x)~|~x \leftarrow b\rrc )\rrc
\end{align*}
\end{definition}
In the monad comprehension we first project out the columns of interest using $p_1$ and deduplicate
the resultant bag. From this bag we extract out the rows indices by which we index the rows of 
the input bag.
For each index $i$ we return the pair consisting of $i$ along with the $p_2$-projection of the input
where the $p_1$-projection of the rows is equal to $i$. We can conclude that this query is measurable
by defining it as a monad comprehension composed of other measurable functions.

Recall the actor grouping example suggested earlier.
Given an input bag from $B(\textsf{Actor} \times \textsf{Movie})$
we can apply $\textsc{group}_{\pi_1,\pi_2}$ to create a table of rows relating actors to
the bag of movies they appeared in. To this table we can apply
$\textsc{map}_{\lambda (x,y). (x, \textsf{size}(y))}$ to arrive at the final result. 
The function $\textsf{size} : B\ \! \textsf{Movie} \to \mathbb N$ can be measurably defined as a fold,
for example.

A second option for defining an group/aggregation query comes from an extension to monad comprehensions 
in Haskell where the syntax has been extended with the keywords {\tt group by}~\cite{wadler-spj}. This extension works for any strong monad, but the user needs to provide a grouping function which, in our case, needs to map
a bag on $X$ to a bag of bags on $X$ where each sub-bag contains the same element. This can be written in
BALG as
\begin{align*}\textsc{group'}(b) &= \textsc{map}_{\lambda i. \textsc{select}_{\lambda x. x \equiv i}(b)}(\textsc{dedup}(b))
\end{align*}
So $\llb \textsc{group'} \rrb : BX \to B^2 X$, with:
\begin{align*} \llb \textsc{group'}(b) \rrb &= \llc \llc x ~|~x\leftarrow b,x=i\rrc ~|~i\leftarrow \textsc{dedup}(b)\rrc
\end{align*}
The entire actor query can now be concisely written in the notation of~\cite{wadler-spj} as
$$\llc (\textsf{the}\ a, \textsf{size}\ m) \mid (a,m)
   \ot \textsf{actorMovieTable},\ \texttt{group by}\ a \rrc.$$
This modified comprehension syntax of~\cite{wadler-spj} works by implicitly changing the types of $a$ and $m$ 
from \textsf{Actor} and \textsf{Movie} in the right half of the comprehension
to $a\in B\ \!\textsf{Actor}$ and $m\in B\ \!\textsf{Movie}$ on the left half.
This allows us to apply the measurable aggregation function $\mathsf{size}:BX\to \mathbb{N}$ to~$m$.
The aggregation function used on $a$ is $\mathsf{the} : B\ \!\textsf{Actor} \to \textsf{Actor}$, which is some measurable function such that $\mathsf{the}(b)=x$ when $b$ is a bag that only contains copies of~$x$. (For example, we could sort $b$ and return the first element.)

%%% Local Variables:
%%% mode: latex
%%% TeX-master: "pdb-monad"
%%% End:

\section{Generating probabilistic databases}\label{sec:gen}
The main focus of this paper has been measurable query languages (Theorem~\ref{thm:balg}).
We now turn to the question of where probabilistic databases come from in the first place, particularly in the setting where they have infinite support.
A good language for generating infinite probabilistic databases remains a topic of active research, but we now illustrate that the monads for probability~$P$ and bags~$B$ could be the basis of a good intermediate language.

\subsection{A distributive law}
\newcommand{\distrl}{\mathsf{distr}}
\newcommand{\normal}{\mathit{normal}}
\newcommand{\bernoulli}{\mathit{bernoulli}}
\newcommand{\poisson}{\mathit{poisson}}
\newcommand{\distrhelp}{\mathsf{distrAcc}}
We recall the distributive law $\distrl$ between the monads $P$ and $B$ that we provided in an earlier paper~\cite{dash} (see also \cite{jacobs-multisets,keimel-plotkin,zwart}):
\[
  \distrl : B( P(X)) \to P (B(X))
\]
Using our fold technology (Thm.~\ref{thm:fold}) we can define the distributive law
as $\distrl=\mathsf{fold}_{\distrhelp}$,  where
\begin{align*}
  &  \distrhelp : P(X)\times P(B(X)) \to P(B(X))\\
  &\distrhelp \ =\ P(X)\times P(B(X)) \xrightarrow{\text{double strength}} P(X\times B(X))
    \xrightarrow {P(\mathsf{add})} P(B(X))
\end{align*}
As usual, this distributive law determines a monad structure on $P\circ B$ \cite{distr}. 

\subsection{Randomizing attributes}
For a first example of a probabilistic database, suppose we are given a deterministic database of (movie,gross) pairs.
We may then decide that the gross figure is inaccurate and should be subject to a noise from a normal distribution, yielding a probabilistic database. This can be done categorically by the following map:
\[
  B(\Movie\times \Real)\xrightarrow{B(\Movie\times \normal)}
  B(\Movie\times P(\Real))\xrightarrow {\text{strength}}
  B(P(\Movie\times \Real)) \xrightarrow {\text{distr}}
  P(B(\Movie\times \Real))
\]
Since $PB$ is a monad, we can use comprehension notation for it, and equivalently write the above generation method
 $\mathit{addNoise}: B(\Movie\times \Real)\to   P(B(\Movie\times \Real))$
as
\[
  \mathit{addNoise}(b)=\llc (m,r')~|~(m,r)\leftarrow b,r'\leftarrow \normal(r,100000)\rrc 
\]
Here we are implictly casting $b\in B(X)$ to $b\in P(B(X))$ and $\normal(r,100000)\in P(\Real)$ to $P(B(\Real))$, implicitly using the units
$\eta_P B:B(X)\to P(B(X))$ and $P\eta_B : P(X)\to P(B(X))$. 

Another use-case for random attributes, studied in~\cite{grohe-lindner-inf-prob}, is to deal with null attributes by drawing them randomly from a vague prior distribution. This would also be easy to express using the $PB$ monad. 
\subsection{Adding random records}
\newcommand{\ractor}{\mathit{r\text-actor}}
\newcommand{\rmovie}{\mathit{r\text-movie}}
We can also add and remove random records straightforwardly.
We note a few helpful facts.
\begin{itemize}
  \item The disjoint union operation $\uplus:B(X)\times B(X)\to B(X)$ lifts to
    $\uplus :P(B(X))\times P(B(X))\to P(B(X))$ by composing with the strength of $P$.
    In this way the composite monad $PB$ has a commutative monoid structure.
  \item 
    Since $B(1)\cong \mathbb{N}$, we can regard the Poisson distribution as a map
    in $\Real\to P(B(1))$, parameterized by rate. 
  \end{itemize}
  Now supposing we also have reasonably uniform distributions
    $\ractor:P(\Actor)$ and $\rmovie:P(\Movie)$, 
we can delete some credits and generate random additional actors for movies, modelling the fact that some actors are unlisted:
\begin{align*}
  \mathit{addRemove}(b)=\quad
  & \llc (a,m)~|~(a,m)\leftarrow b, 1\leftarrow \mathit{bernoulli}(0.9)\rrc 
    \\\ \uplus\  &\llc (a,m)~|~n\leftarrow \mathit{poisson}(3),a\leftarrow \ractor, m\leftarrow \rmovie\rrc 
\end{align*}
The first line deletes random rows with probability $0.1$, and the second line
adds in some extra actors (on average, 3 extra actors).
Of course, a more sophisticated model could take into account other prior information such as relationships and ages between actors. 

\subsection{Towards GDatalog}
\newcommand{\earthquake}{\mathsf{earthquake}}
\newcommand{\burglary}{\mathsf{burglary}}
\newcommand{\addres}{\mathsf{address}}
\newcommand{\crimechance}{\mathsf{crimechance}}
\newcommand{\trigger}{\mathsf{trigger}}
\newcommand{\alarm}{\mathsf{alarm}}
\newcommand{\City}{\mathsf{City}}
\newcommand{\House}{\mathsf{House}}
The GDatalog language has recently been proposed as a generative language for probabilistic databases~\cite{barany-gdatalog,grohe-gdatalog}. The language combines datalog-style features with continuous probability distributions.

In general, GDatalog is recursive. We have not treated recursion in this paper, so we focus on the non-recursive fragment. This can easily be translated into the $PB$ monad.
For example, consider the following GDatalog program taken from \cite{grohe-gdatalog}. The idea is to simulate possibly faulty burglar alarms which either go off because of a burglary or because of an earthquake.
\begin{align*}
  &\earthquake(c,\bernoulli(0.1))\leftarrow \crimechance(c,r)
\\
  &
    \burglary(x,\bernoulli(r)) \leftarrow \addres(x,c),\crimechance(c,r)
  \\
  &
    \trigger(x,\bernoulli(0.6)) \leftarrow \addres(x,c),\earthquake(c,1)
  \\
  &  \trigger(x,\bernoulli(0.9)) \leftarrow \burglary(x,1)
  \\
  &\alarm(x) \leftarrow \trigger(x,1)
  \end{align*}
We will regard this as transforming a deterministic database into a probabilistic one, $B(X) \to P(B(X))$ where
\[\begin{array}{rll}
  X\ = & &(\earthquake:\City\times 2)\uplus
(\crimechance:\City \times [0,1])\uplus
(\addres:\House\times \City)\\&\uplus&
(\burglary: \House\times 2)\uplus
(\trigger: \House\times 2)\uplus 
                                       (\alarm : \House)\end{array}
                                     \]
The GDatalog program can be translated almost verbatim as a sequence of definitions of a
probabilistic database in $PB(X)$ using monad comprehensions, starting from $b\in B(X)$, as follows.
  \[\begin{array}{rl@{}l}
    b_1 &= b \ &\uplus\ \llc \earthquake(c,z)~|~\crimechance(c,r)\leftarrow b, z\leftarrow \bernoulli(0.1)\rrc 
    \\  b_2 &= b_1\ &\uplus\  
\llc  \burglary(x,z)~|~\crimechance(c,r)\leftarrow b_1, \addres(x,c')\leftarrow b_1,c=c',z\leftarrow \bernoulli(r)\rrc 
    \\b_3 &= b_2 &\uplus\ 
\llc  \trigger (x,z)~|~\addres(x,c)\leftarrow b_2, \earthquake(c',1)\leftarrow b_2,c=c',z\leftarrow \bernoulli(0.6)\rrc 
    \\b_4 &= b_3 &\uplus\ 
\llc  \trigger (x,z)~|~\burglary(x,1)\leftarrow b_3, z\leftarrow \bernoulli(0.9)\rrc 
    \\b_5 &= b_4 &\uplus\ 
\llc  \alarm (x)~|~\trigger(x,1)\leftarrow b_4\rrc 
  \end{array}\]
One of the main results of~\cite{grohe-gdatalog} is that GDatalog programs yield proper probabilistic databases, that is, that all the constructions are measurable. For examples such as this, in the non-recursive fragment, this measurability is immediate from the fact that we are programming in the $PB$ monad, where everything is measurable.

\section{Summary and outlook}

We have shown that the bag monad on standard Borel spaces is strong (Thm.~\ref{thm:bag-monad}) and supports a fold operation (Thm.~\ref{thm:fold}) which is connected to its characterization as the free commutative monoid. We have used this to show straightforwardly that all the bag query operations of BALG yield measurable queries~(Thm.~\ref{thm:balg}), and so they are all safe to use in defining queries of probabilistic databases. This affirms the results in~\cite{grohe-lindner}, generalizing them to the full BALG language and clarifying the measurability arguments by factoring them through the measurability of the bag monad. Finally, in Section~\ref{sec:gen} we have argued by illustrations that the combination of the bag and probability monads gives a powerful intermediate language for processes that generate probabilistic databases.

%%% Local Variables:
%%% mode: latex
%%% TeX-master: "pdb-monad"
%%% End:

%\input{scratch}

\bibliographystyle{eptcs}
\bibliography{generic-reconstructed.bib}

\end{document}